\newtheorem{theorem}{Theorem}
\newtheorem{corollary}[theorem]{Corollary}
\newcommand{\R}{\mathbb{R}}
\newcommand{\eps}{\varepsilon}
\renewcommand{\d}{\,\mathrm{d}} 
\newcommand{\psispace}{\mathcal{W}}
\newcommand{\rhogs}{{\rho_\mathrm{gs}}}
\newcommand{\rhogsprime}{{\rho'_\mathrm{gs}}}
\newcommand{\rhogspure}{{\rho_\mathrm{gs,pure}}}
\newcommand{\rhogsens}{{\rho_\mathrm{gs,ens}}}
\newcommand{\vHxc}{v_\mathrm{Hxc}}
\newcommand{\FHK}{F_\mathrm{HK1}}
\newcommand{\FHKpure}{F_\mathrm{HK1,pure}}
\newcommand{\FHKens}{F_\mathrm{HK1,ens}}
\newcommand{\FSD}{F^0_\mathrm{SD}}
\newcommand{\FLL}{F_\mathrm{LL}}
\newcommand{\FDM}{F_\mathrm{DM}}
\newcommand{\FCS}{F_\mathrm{CS}}
\newcommand{\FCSpure}{F_\mathrm{CS,pure}}
\newcommand{\FCSens}{F_\mathrm{CS,ens}}
\newcommand{\FHxc}{F_\mathrm{Hxc}}
\newcommand{\xx}{\mathbf{x}}
\newcommand{\rr}{\mathbf{r}}
\newcommand{\rrN}{\underline{\mathbf{r}}} 
\newcommand{\rrtwotoN}{\mathbf{r}_\perp}
\newcommand{\sigmaN}{\underline{\sigma}} 
\newcommand{\vv}{\mathbf{v}}
\newcommand{\vrep}{\ensuremath{v\text{-}\mathsf{rep}}}
\newcommand{\vreppure}{\ensuremath{\vrep_\mathrm{pure}}}
\newcommand{\vrepens}{\ensuremath{\vrep_\mathrm{ens}}}
\newcommand{\Nrep}{\ensuremath{N\text{-}\mathsf{rep}}}
\newcommand{\jpara}{\mathbf{j}^\mathrm{p}}
\newcommand{\loc}{\mathrm{loc}}
\DeclareMathOperator{\trace}{Tr}
\DeclareMathOperator{\conv}{conv}
\DeclareMathOperator{\dom}{dom}
\newcommand{\der}{\frac{\delta}{\delta \rho}}
\newcommand\changed[1] {#1}
\begin{document}

\author{Markus Penz}
\address{Basic Research Community for Physics, Innsbruck, Austria}

\author{Erik I. Tellgren}
\address{Hylleraas Centre for Quantum Molecular Sciences, Department of Chemistry, University of Oslo, Norway}

\author{Mih\'aly A. Csirik}
\address{Hylleraas Centre for Quantum Molecular Sciences, Department of Chemistry, University of Oslo, Norway}
\address{Department of Computer Science, Oslo Metropolitan University, Norway}

\author{Michael Ruggenthaler}
\address{Max Planck Institute for the Structure and Dynamics of Matter, Hamburg, Germany}

\author{Andre Laestadius}
\email{andre.laestadius@oslomet.no}
\address{Department of Computer Science, Oslo Metropolitan University, Norway}
\address{Hylleraas Centre for Quantum Molecular Sciences, Department of Chemistry, University of Oslo, Norway}

\title{The structure of the density-potential mapping\\Part I: Standard density-functional theory}

\begin{abstract}
  The Hohenberg--Kohn theorem of density-functional theory (DFT) is broadly considered the conceptual basis for a full characterization of an electronic system in its ground state by just the one-body particle density. Part I of this review aims at clarifying the status of the Hohenberg--Kohn theorem within DFT and Part II at different extensions of the theory that include magnetic fields. We collect evidence that the Hohenberg--Kohn theorem does not so much form the basis of DFT, but is rather the consequence of a more comprehensive mathematical framework. Such results are especially useful when it comes to the construction of generalized DFTs.
\end{abstract}

\maketitle
\tableofcontents

 
\section{Introduction} 

The theorem of \citeauthor{Hohenberg1964}~\cite{Hohenberg1964} \changed{(HK)} is usually presented as the theoretical justification of density-functional theory (DFT). It states that the one-body particle density uniquely (up to an additive constant) determines the scalar potential of a non-relativistic many-electron system in its ground state. 
The Mathematical analysis of ground-state DFT was pioneered by \citeauthor{Lieb1983}~\cite{Lieb1983}, using tools from convex analysis. In it, some important problems, especially in relation with differentiability of the involved functionals that map densities to energies, were left unanswered and remained as open questions.  
\citeauthor{Lammert2007}~\cite{Lammert2007} then demonstrated that the key functional of DFT is indeed non-differentiable, but it remained unclear to what \changed{extent} this threatens the foundations of DFT and its algorithmic realization, the Kohn--Sham scheme employed for practical calculations.
Regularization as a means to overcome non-differentiability has been applied to DFT \cite{Kvaal2014} (Section~\ref{sec:regDFT}) and its extension, current DFT (CDFT) \cite{KSpaper2018,MY-CDFTpaper2019}. 
The existence of functional derivatives through regularization also avoids the problem of $v$-representability that usually haunts DFT, i.e., that not every reasonable density is the solution to a certain potential (Section~\ref{sec:rep}).

A central result in this work is a very convenient and novel formulation of the \changed{HK} theorem that restructures it into two sub-theorems, HK1 and HK2 (Section~\ref{sec:HK}):
\begin{itemize}
\item \textbf{(HK1)} If two potentials share a common ground-state density then they also share a common ground-state wave function or density matrix.
\item \textbf{(HK2)} If two potentials share any common eigenstate and if that eigenstate is non-zero almost everywhere (a property that is guaranteed if the the unique-continuation property (UCP) holds; see Section~\ref{sec:UCP}) then they are equal up to a constant.
\end{itemize}
Combining HK1 and HK2, one obtains the classical \changed{HK} theorem and with it a well-defined density-potential mapping. The proof of HK1 will be shown to be immediate from just the formulation of ``ground-state energy''. Consequently, it is also easily attainable in an abstract or extended formulation of DFT (Section~\ref{sec:abstract-dens-pot}). The situation for HK2, on the other hand, is more complicated but, as will be demonstrated, it holds true with certain restrictions in the standard DFT setting. It is known \emph{not} to hold in paramagnetic CDFT~\cite{Capelle2002} and has, to the best of our knowledge, an unknown status in total CDFT.
In Part II of this review, we will exemplify how different DFTs follow this structure and, maybe more importantly, pinpoint why this route might fail.


After analyzing its basic structure, the status of the HK theorem within DFT is scrutinized. If only the ground state of a system is the matter of interest, a constrained-search approach seems to be sufficient for the formulation of DFT, and the usual type of constrained-search functional~\changed{\cite{percus1978,Levy79}} even \emph{implicitly includes} HK1 (Section~\ref{sec:functionals}). \changed{Besides being a mathematically more transparent formulation than the HK theorem, the constrained-search formalism is also a better starting point for deriving approximate density functionals. Nonetheless, the full HK theorem remains important for going beyond the bare minimum needed to set up a ground-state theory. For example, the HK theorem implies that the ground-state density determines not only the ground state but, by fixing the scalar potential, also all excited states. This becomes relevant when  thermostatistical properties are considered. Furthermore,} in order to be able to define the Kohn--Sham scheme (Section~\ref{sec:KS}), one actually demands \changed{more than} just the HK result, relying on differentiability of the energy functional that in turn would imply the \emph{whole} HK result (Section~\ref{sec:differentials}). Consequently, in a (Moreau--Yosida) regularized setting, the Kohn--Sham scheme can be rigorously formulated and even proven to converge in finitely many dimensions \cite{penz2019guaranteed,penz2020erratum,Kvaal2022-MY}, and HK becomes just a by-product.

Although we will do our best to orient the reader within the rich subject that is DFT, the scope of this review is limited. We will mainly focus on, in our opinion, matters closely related to the \changed{HK} mapping and properties of the exact functional(s). Many excellent reviews and textbooks are available on the subject \cite{vonBarth2004basic,burke2007abc,burke2012perspective,dreizler2012-book,eschrig2003-book,parr}. For the interested reader, we also point out the very recent article based on a round-table discussion~\cite{teale2022round-table}.

\section{Preliminaries}
\label{sec:prelim}

Density-functional theory is a\changed{n approach} to describe particles that obey the laws of quantum mechanics, but that avoids their full description by a wave function and instead switches to reduced quantities like the one-particle density. In its basic form discussed here, the focus is solely on the ground-state properties of the quantum system.
For the configuration space of a single particle we always choose $\R^3$ with the additional spin degree-of-freedom for spin-$\frac{1}{2}$ particles.
The Hamiltonian comprises three parts,
\begin{equation*}
H[v] = T + W + V[v],
\end{equation*}
relating to the kinetic energy, the Coulomb repulsion, and the external scalar potential, respectively.
The internal parts will be collected as $H_0 = T + W$. The kinetic-energy operator is $T = -\frac{1}{2}\sum_{i=1}^N \nabla_i^2$ in standard DFT, where atomic units are employed.
Notation-wise, we use small letters for one-body objects. The external potential contribution $V[v]$ is always defined from a one-body potential $v(\rr)$ and is of an additive form,
\begin{equation*}
V[v](\rrN) = \sum_{i=1}^N v(\rr_i),
\end{equation*}
where $\rrN=(\rr_1,\dots,\rr_N)$. For later reference we also define $\sigmaN = (\sigma_1,\dots,\sigma_N)$ for the spin degrees-of-freedom.
The full quantum-mechanical description of a system in its ground state is achieved by determining the eigenstate $\psi_0$ of $H[v]$ that has the correct symmetry and the lowest eigenvalue $E_0$ (ground-state energy),
\begin{equation}\label{eq:SE}
 H[v]\psi_0 = E_0 \psi_0.
\end{equation}
If such a lowest eigenstate is not unique, we speak of \emph{degeneracy}, a case that will often appear in the discussion below and that leads to several complicacies. Then a valid ground state can also be given as a statistical mixture of the pure ground states $\psi_k$ in the form of a density matrix $\Gamma = \sum_k \lambda_k \vert \psi_k\rangle \langle \psi_k \vert$ with $\lambda_k\in [0,1]$ and $\sum_k \lambda_k = 1$.
It is natural to require states of finite kinetic energy,
\begin{equation*}
   \langle \psi| T | \psi \rangle = \frac{1}{2}\sum_{i=1}^N \sum_{\sigmaN} \int_{\R^{3N}} |\nabla_i \psi|^2 \d\rrN < +\infty,
\end{equation*}
and we define the basic set for wave functions
\begin{equation*}
    \psispace = \{ \psi \mid \, \psi \; \text{anti-symmetric}, \,\langle \psi| T | \psi \rangle < +\infty  \}. 
\end{equation*}
In cases where density matrices $\Gamma$ are considered, we require $\psi_k\in\psispace$ for all their components.

The one-particle density of a given $\psi$ as the basic variable of standard DFT is
\begin{equation}\label{eq:def-rho}
\rho_\psi(\rr_1) = N \sum_{\sigmaN} \int_{\R^{3(N-1)}} |\psi|^2 \d \rrtwotoN,
\end{equation}
where we used the shorthand notation $\rrtwotoN=(\rr_2,\dots,\rr_N)$, and it is
$\rho_\Gamma(\rr) = \sum_k \lambda_k \rho_{\psi_k}(\rr)$ for a given mixed state $\Gamma$. Since $\Gamma$ already includes the squared wave function from Eq.~\eqref{eq:def-rho}, the mapping $\Gamma \mapsto \rho_\Gamma$ is linear. Note that whenever we talk about a ``density'', this will be assumed to be a map $\rho:\R^3 \to \R_{\geq 0}$ that is normalized to the particle number $N$, $\int\rho(\rr) \d\rr=N$, like it is automatically the case for $\rho_\psi$ and $\Gamma_\psi$ if $\psi,\Gamma$ are normalized to 1.

The density alone suffices to give an expression for the potential energy contribution. The resulting integral over the single-particle configuration space will be written like an inner product $\langle \cdot, \cdot \rangle$, to wit,
\begin{equation}\label{eq:v-rho-pairing}
\begin{aligned}
    \langle \psi| V[v] | \psi \rangle &= \sum_{i=1}^N \sum_{\sigmaN} \int_{\R^{3N}} v(\rr_i) |\psi|^2 \d \rrN \\
    &= N \sum_{\sigmaN} \int_{\R^3} v(\rr_1) \int_{\R^{3(N-1)}} |\psi|^2 \d \rrtwotoN \d \rr_1\\
    &= \int_{\R^3} v(\rr)\rho_\psi(\rr) \d \rr = \langle v,\rho_\psi \rangle.
\end{aligned}
\end{equation}
The notation $\langle v,\rho \rangle$ thus expresses a dual pairing between two $L^p$ spaces or a combination of such, one for densities and the other one for potentials. These density and potential spaces are the topic of the next section. Without going into technicalities, the space $L^p(\R^n)$, $1\leq p \leq \infty$, can be thought of as all functions $f(\rr)$ that have a finite $L^p$ norm
\begin{equation*}
    \|f\|_{L^p} = \left( \int_{\R^n} |f(\rr)|^p \d\rr \right)^{1/p} < \infty,
\end{equation*}
where in the case $p=\infty$ a supremum norm is employed instead.

\section{Representability of densities}
\label{sec:rep}

The notion of ``representability'' is ubiquitous and conceptually important in DFT. 
It generally refers to the situation that any density of a certain class comes from a well-defined construction.
Such a construction can simply be how a density is calculated from an $N$-particle wave function of finite kinetic energy following Eq.~\eqref{eq:def-rho} and we then call the density ``$N$-representable''. Or one demands that the density should be that of an actual ground-state solution of a Schr\"odinger equation with some given external potential $v$ and one calls it ``$v$-representable''. However, this definition of $v$-representability is a bit naive since the set of permitted potentials to choose from was not even specified \cite{Lammert2007}. One could argue that any potential that can be put into the Schr\"odinger equation should be considered, but then the dual pairing $\langle v,\rho\rangle $ appearing in Eq.~\eqref{eq:v-rho-pairing} between the spaces of densities and potentials might be ``lost'', which has consequences for the density functionals defined later in Section~\ref{sec:functionals}. So in order to talk about $v$-representability, we will first have to choose a basic density space that includes the $N$-representable densities.

The task of determining $N$-representable density classes was originally tackled by \citeauthor{gilbert1975}~\cite{gilbert1975} and \citeauthor{harriman1981}~\cite{harriman1981}. In the first work, differentiability of the density was required, whereas in the second work no further conditions on the density were assumed.
Here, we rely on the version by \citeauthor{Lieb1983}~\cite[Theorem~1.2]{Lieb1983} that gives the following class of $N$-representable densities,
\begin{equation*}
    \Nrep = \left\{ \rho \mid \rho(\rr)\geq 0, \smallint \rho \d \rr=N, \nabla \sqrt \rho \in L^2(\R^3) \right\}.
\end{equation*}
The benefit of the additional constraint $\nabla\sqrt \rho \in L^2$ is that one can always find a wave-function that not only gives the desired density but also has finite kinetic energy and is thus in $\psispace$ (and in addition is properly normalized). \citeauthor{Lieb1983}~\cite{Lieb1983} further showed that $\Nrep$ is convex and included in $X=L^1(\R^3)\cap L^3(\R^3)$. This space $X$ is the basic density space in terms of $L^p$ spaces, so by Eq.~\eqref{eq:v-rho-pairing} this automatically yields a corresponding potential space that is its dual, $X^*=L^{3/2}(\R^3)+L^\infty(\R^3)$. Any element $v\in X^*$ can thus be written as $v=v_1+v_2$ with $v_1\in L^{3/2}(\R^3)$ and $v_2\in L^\infty(\R^3)$. 
Potentials of Coulomb type, $v(\rr) = C r^{-1}$, $r =|\rr|$, are for example elements of this $X^*$ (by virtue of $\int_0^R |v(\rr)|^{3/2} r^2 \d r < \infty$ for any finite $R>0$ and $|v(\rr)|<\infty$ for $r>R$). 

The issue of ``$v$-representability'' is much more profound. To date there is no explicit description for the set of all $v$-representable densities $\vrep$. This issue is known as the ``$v$-representability problem''. We already noted that $\vrep$ should contain all densities that are a ground-state density for some potential $v\in X^*$. For a glimpse of what densities have to be included in this set we refer to the illustrative construction of \citeauthor{ENGLISCH1983}~\cite{ENGLISCH1983}. At this point one has to differentiate between several levels of $v$-representability. We defined \vrep{} as coming from a ground state of a Schr\"odinger equation with some given external potential $v$. Within DFT we usually consider two settings, the full system that contains a (Coulomb) interaction $W$ and the Kohn--Sham system that does not. So whenever we talk about $v$-representability, this can be amended by the attributes ``interacting'' or ``non-interacting'' and it is {\it not} obvious at this point if the two classes are equal, overlap, or are even disjoint. After all, the sets are not explicitly known. Within each class we also have the possibility of ground-state degeneracy. Then, instead of ground-state wave functions, the more general concept of density matrices 
comes into play. The resulting notions are then ``pure-state $v$-representability'' and ``ensemble $v$-representability''. In the second case such a density $\rho$ is then the convex combination of pure-state $v$-representable densities $\rho_k$ that come from the degenerate ground-states $\psi_k$ of $H[v]$, i.e., $\rho=\sum_k\lambda_k\rho_k$ ($\lambda_k \in [0,1], \sum_k\lambda_k=1$). In the first case only densities from pure states are allowed, but they might still individually come from a set of degenerate ground-state wave functions. It was demonstrated by \citeauthor{ENGLISCH1983}~\cite{ENGLISCH1983} by giving explicit examples that there are $N$-representable densities that are not ensemble $v$-representable (an obvious example is a density that vanishes on a set of positive measure, however, for more elaborate examples we refer to Section 3.2 in Ref.~\citenum{ENGLISCH1983}). \citeauthor{levy1982}~\cite{levy1982} and \citeauthor{Lieb1983}~\cite{Lieb1983} gave arguments that an ensemble $v$-representable density does not have to be pure-state $v$-representable. An explicit example for such a density $\rho \in \vrepens \setminus \vreppure$ was found within a finite-lattice system of cuboctahedral symmetry \cite{penz-DFT-graphs}. So we can symbolically note that
\begin{equation}\label{eq:vrep-Nrep-subset}
    \vreppure \subsetneqq \vrepens \subsetneqq \Nrep \subsetneqq X.
\end{equation}
In \citeauthor{Garrigue2021}~\cite{Garrigue2021} it was demonstrated that the set $\vreppure$ 
is path-connected. There are further topological relations between the sets appearing in Eq.~\eqref{eq:vrep-Nrep-subset} that are worth mentioning. Since every $\rho\in\vrepens$ is a convex combination $\rho=\sum_k\lambda_k\rho_k$ with $\rho_k\in\vreppure$, it holds
\begin{equation*}
    \vrepens \subseteqq \conv\vreppure \subsetneqq \Nrep \subsetneqq X,
\end{equation*}
where $\conv$ is the convex hull of a set. So while $\vreppure$ is definitely not convex because of the mentioned counterexamples, $\vrepens$ might still be (to our understanding this is not known). Lastly, $\Nrep$ is the closure of $\vrepens$ within $L^1\cap L^3$, which means that any $\rho\in\Nrep$ can be approximated arbitrarily well by densities in $\vrepens$ when distance is measured in the $L^1\cap L^3$-norm \cite[Theorem~3.14]{Lieb1983}. With the notion of the ``subdifferential'' from Section~\ref{sec:differentials}, this result can be established as a direct consequence of the Br{\o}ndsted--Rockafellar theorem \cite[Corollary~2.44]{Barbu-Precupanu}. Still, potentials that lead to densities that are arbitrarily close could be very far apart in the potential space $X^*$.
On the other hand, it has been suggested that $\vreppure$ is not dense in $\Nrep$ (see Conjecture~3.8 in Ref.~\citenum{garrigue2022building}).

\section{The Hohenberg--Kohn theorem}
\label{sec:HK}

The classical HK theorem \cite{Hohenberg1964} states the existence of a well-defined density-potential mapping for ground states.
For a given potential $v$,
\begin{equation}\label{eq:E-def}
\begin{aligned}
E[v] &= \inf\left\{ \langle \psi |H_0 + V[v]|\psi \rangle \mid \psi \in \psispace, \|\psi\|=1 \right\} \\
&= \inf\left\{ \langle \psi |H_0 | \psi \rangle + \langle v,\rho_\psi \rangle \mid \psi \in \psispace, \|\psi\|=1 \right\}
\end{aligned}
\end{equation}
is the \emph{ground-state energy} by the Rayleigh--Ritz variation principle. If a minimizer exists then $\psi$ and $\rho_\psi$ are the corresponding \emph{ground state} and \emph{ground-state density} that might not be unique in the case of degeneracy.
If a minimizer does not exist, there is still always a sequence $\psi_i$ in $\psispace$ with $\|\psi_i\|=1$ such that $\langle \psi_i |H_0 + V[v]|\psi_i \rangle$ converges to $E[v]$. 
In Eq.~\eqref{eq:E-def}, $v$ should be selected from a class that makes $E[v]$ bounded below. See \citeauthor{reed1975ii}, Section~X.2, for an extensive discussion on such potentials~\cite{reed1975ii}. A further demand on $v$ will later be that it guarantees a ground state that is non-zero (almost everywhere), a property needed in the proof of the second part of the HK theorem (HK2) below.

In Eq.~\eqref{eq:E-def} the problem of solving a partial-differential equation, the stationary Schr\"odinger equation \eqref{eq:SE}, has been transformed into a variational problem: finding a minimizer for Eq.~\eqref{eq:E-def}. The route backwards is also feasible and any such minimizer is also a distributional solution to the Schr\"odinger equation \cite[Theorem~11.8]{LiebLoss}. 

We will now demonstrate that simply by virtue of the structure of $E[v]$, where density and potential are combined in the term $\langle v,\rho \rangle$ that makes no explicit reference to the wave function while the remaining part $\langle \psi |H_0 |\psi \rangle$ (or $\trace(H_0\Gamma)$, if density matrices are used to describe the state) does not depend on $v$, we can already define a mapping from ground-state densities to ground-state wave functions or density matrices. This, then, is already half of a \changed{HK} theorem, that we will already give in a variant for ensemble $v$-representable densities.

\begin{theorem}[HK1]
Let $\Gamma_1$ be a ground state of $H[v_1]$ and $\Gamma_2$ a ground state of $H[v_2]$. If $\Gamma_1,\Gamma_2 \mapsto \rho$, i.e., if these states share the same density, then $\Gamma_1$ is also a ground state of $H[v_2]$ and $\Gamma_2$ is also a ground state $H[v_1]$.
\end{theorem}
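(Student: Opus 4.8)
The plan is to exploit the additive splitting $\trace(H[v]\Gamma) = \trace(H_0\Gamma) + \langle v,\rho_\Gamma\rangle$, in which the potential enters \emph{only} through the density. Since $\Gamma_1$ and $\Gamma_2$ share the density $\rho$, inserting one state into the other's Hamiltonian alters only the $H_0$-part while leaving the pairing $\langle v,\rho\rangle$ untouched. First I would record the two ground-state identities coming from the hypothesis,
\[
E[v_1] = \trace(H_0\Gamma_1) + \langle v_1,\rho\rangle, \qquad E[v_2] = \trace(H_0\Gamma_2) + \langle v_2,\rho\rangle.
\]

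Next, I would invoke the Rayleigh--Ritz variational principle in its density-matrix form: because $\Gamma \mapsto \trace(H[v]\Gamma)$ is linear and every admissible density matrix is a convex combination of pure states in $\psispace$, the infimum defining $E[v]$ in Eq.~\eqref{eq:E-def} is equally an infimum over density matrices. Testing $\Gamma_1$ against $H[v_2]$ and $\Gamma_2$ against $H[v_1]$ then yields
\[
E[v_2] \leq \trace(H_0\Gamma_1) + \langle v_2,\rho\rangle, \qquad E[v_1] \leq \trace(H_0\Gamma_2) + \langle v_1,\rho\rangle.
\]

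Subtracting the ground-state identities from these inequalities, the identical pairing terms cancel, leaving $\trace(H_0\Gamma_2) \leq \trace(H_0\Gamma_1)$ from the first and $\trace(H_0\Gamma_1) \leq \trace(H_0\Gamma_2)$ from the second. Hence the internal energies coincide, $\trace(H_0\Gamma_1) = \trace(H_0\Gamma_2)$. Feeding this equality back shows that both variational inequalities are in fact equalities, so $\Gamma_1$ attains $E[v_2]$ and $\Gamma_2$ attains $E[v_1]$; that is, each state is a ground state of the other potential's Hamiltonian, as claimed.

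As for difficulty, there is essentially no obstacle, which is precisely the point the authors emphasize: HK1 is a formal consequence of the variational characterization of the ground-state energy together with the linear-in-density coupling of the potential. The only thing meriting a moment's care is that the variational principle be phrased for density matrices rather than only pure states, so that $\Gamma_1,\Gamma_2$ are legitimate trial objects; this is immediate from linearity and convexity. Notably, the argument never uses any specific feature of $H_0$ or of the pairing beyond the additive structure, which is why (as the excerpt anticipates) it should carry over verbatim to abstract and extended DFT settings.
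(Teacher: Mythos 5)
Your proof is correct and is essentially the paper's second (``traditional, direct'') proof: both test each ground state against the other Hamiltonian via the Rayleigh--Ritz principle and exploit the shared density to cancel the pairing terms, your version expressing the cancellation as the equality of internal energies $\trace(H_0\Gamma_1)=\trace(H_0\Gamma_2)$ where the paper writes the equivalent identity $E[v_1]-E[v_2]=\langle v_1-v_2,\rho\rangle$ before concluding $\trace(H[v_2]\Gamma_1)=E[v_2]$. (The paper additionally gives a first proof via the constrained-search structure --- pulling $\langle v,\rho\rangle$ out of the minimization over $\Gamma\mapsto\rho$ --- but your argument matches its Proof~2.)
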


\begin{proof}[Proof 1]
Since we assumed the existence of ground states $\Gamma_1,\Gamma_2$ for the potentials $v_1,v_2$, the infimum in Eq.~\eqref{eq:E-def}, when varied over density matrices, is actually a minimum.
Further, the potential-energy contribution $\langle v,\rho \rangle$ is fixed because $\rho$ is given and can be taken out of the minimum,
\begin{subequations}\label{eq:hk1-proof}
\begin{align}
    E[v_1] &= \min_{\Gamma'_1 \mapsto \rho} \trace(H_0 \Gamma'_1 )  + \langle v_1,\rho \rangle \nonumber\\
    & = \trace(H_0 \Gamma_1 )  + \langle v_1,\rho \rangle\\
    E[v_2] &= \min_{\Gamma'_2 \mapsto \rho} \trace(H_0 \Gamma'_2 )  + \langle v_2,\rho \rangle  \nonumber\\
    & = \trace(H_0 \Gamma_2 )  + \langle v_2,\rho \rangle
\end{align}
\end{subequations}
For completeness, we also give the same expression for a general $v$ in case the state is pure.
\begin{equation}\label{eq:hk1}
    E[v] = \min_{\psi \mapsto \rho} \langle \psi |H_0 | \psi \rangle  + \langle v,\rho \rangle \\
\end{equation}
Here, the notation ``$\Gamma \mapsto \rho$'' and ``$\psi \mapsto \rho$'' means variation over all states in $\psispace$ with density $\rho$. But the remaining minima in Eq.~\eqref{eq:hk1-proof} are then completely determined by the fixed ground-state density and we can always choose $\Gamma_1=\Gamma_2$ \changed{[primes removed]} as a valid ground state. Thus the density alone already defines the ground state, irrespective of the potential $v_1$ or $v_2$.
\end{proof}

As highlighted before, the above proof relies purely on the specific structure of the energy function $E[v]$ that allows the potential part to be taken as a separate, additive contribution that depends solely on the density. This idea is due to Paul E. Lammert (during discussion at the workshop ``Do Electron Current Densities Determine All There Is to Know?'' in Oslo, 2018). In contrast to this, the usual proofs of this part of the HK theorem additionally depend on the \emph{linear} structure of the density-potential pairing. Moreover, such proofs are almost always performed indirectly (\textit{reductio ad absurdum}), with a few notable exceptions  \cite{pino2007HK,Garrigue2018}. For completeness, we will give an additional, more traditional proof, yet one that is direct and does not work by raising a contradiction.

\begin{proof}[Proof 2]
By the variational principle, we have
\begin{equation*}
\begin{aligned}
    E[v_1] &= \trace( H[v_1] \Gamma_1) \leq \trace( H[v_1] \Gamma_2), \\
    E[v_2] &= \trace( H[v_2] \Gamma_2) \leq \trace( H[v_2] \Gamma_1).
\end{aligned}
\end{equation*}
Exploiting the shared density $\rho$, this may be written as
 \begin{equation*}
  \begin{split}
   E[v_1] & = \trace( H_0 \Gamma_1) + \langle v_1, \rho \rangle \\
   &\leq \trace( H_0 \Gamma_2) + \langle v_1 + v_2-v_2, \rho \rangle\\
   & = E[v_2] + \langle  v_1-v_2, \rho \rangle
   \end{split}
 \end{equation*}
and analogously as
 \begin{equation*}
   E[v_2] \leq E[v_1] + \langle  v_2-v_1, \rho \rangle.
 \end{equation*}
Combining the inequalities gives 
\[E[v_1] - E[v_2] = \langle  v_1-v_2, \rho \rangle
\]
and from 
\[
\trace( H[v_2] \Gamma_1) = \trace( H[v_1] \Gamma_1) - \langle  v_1-v_2, \rho \rangle
\]
that $\trace( H[v_2] \Gamma_1)=E[v_1]$. So $\Gamma_1$ is also a ground state of $H[v_2]$. Likewise, $\trace( H[v_1] \Gamma_2)=E[v_2]$, so $\Gamma_2$ is also a ground state of $H[v_1]$, as required.
\end{proof}

HK1 holds generally for mixed or pure ground states. The same proofs remain valid when the theorem is specialized to a statement about pure states $\Gamma_i = |\psi_i\rangle \langle \psi_i|$.
An immediate but maybe surprising consequence that is often referred to as the basis of DFT is that a ground-state density $\rhogs$ alone already determines a ground state. This result has been coined a \changed{\emph{weak HK-like result}} before~\cite{Tellgren2018} and it will be used to define the HK1 functionals on $\vreppure$ and $\vrepens$ in Eq.~\eqref{eq:F-HK-def} below. The ground state (\changed{associated} with $\rhogs$) is pure if $\rhogs \in \vreppure$ but has to be an ensemble if $\rhogs \in \vrepens \setminus \vreppure$.
Any state that is a minimizer in Eq.~\eqref{eq:hk1-proof} is really a ground state for \emph{all} potentials that share the same ground-state density. That all those potentials are in fact equal (up to a constant) is then the statement of HK2, the second part of the \changed{HK} theorem.
It will be formulated for eigenstates, in case of an ensemble we are free to just take any of its components.

\begin{theorem}[HK2]
If two potentials share any common eigenstate and if that eigenstate is non-zero almost everywhere, then the potentials are equal up to a constant.
\end{theorem}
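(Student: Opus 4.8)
The plan is to exploit the fact that $H[v_1]$ and $H[v_2]$ differ only through the additive one-body potential, so that their difference is a pure multiplication operator. Writing the two eigenvalue equations for the shared eigenstate $\psi$,
\[
H[v_1]\psi = E_1\psi, \qquad H[v_2]\psi = E_2\psi,
\]
and subtracting them, the potential-independent internal part $H_0 = T + W$ cancels identically and leaves
\[
\bigl(V[v_1] - V[v_2]\bigr)\psi = (E_1 - E_2)\psi.
\]
Setting $w = v_1 - v_2 \in X^*$ and $\Delta E = E_1 - E_2$, and recalling that $V[w](\rrN) = \sum_{i=1}^N w(\rr_i)$, this reads
\[
\Bigl(\sum_{i=1}^N w(\rr_i) - \Delta E\Bigr)\psi(\rrN) = 0
\]
as an identity of functions (or distributions) on $\R^{3N}$.

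First I would pass from this product relation to a pointwise statement about $w$ alone, and this is exactly where the hypothesis that $\psi$ is non-zero almost everywhere enters: on the full-measure set where $\psi\neq 0$ one may divide by $\psi$, obtaining
\[
\sum_{i=1}^N w(\rr_i) = \Delta E \qquad \text{for a.e. } \rrN \in \R^{3N}.
\]
As remarked after Eq.~\eqref{eq:E-def}, the non-vanishing almost everywhere is precisely what the unique-continuation property secures for the relevant eigenstates.

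Next I would conclude that $w$ is constant by a separation-of-variables argument. Since $w\in X^*=L^{3/2}(\R^3)+L^\infty(\R^3)$ is a genuine (a.e.-defined) function, the sum $\sum_i w(\rr_i)$ is well defined on $\R^{3N}$, and by Fubini the displayed identity holds, for a.e.\ fixed $(\rr_2,\dots,\rr_N)$, for a.e.\ $\rr_1$. Fixing one such generic tuple yields $w(\rr_1) = \Delta E - \sum_{i=2}^N w(\rr_i) = \text{const}$ for a.e.\ $\rr_1$, so $w$ equals some constant $c$ almost everywhere. Hence $v_1 - v_2 = c$, i.e.\ the two potentials agree up to an additive constant (and, consistently, $\Delta E = Nc$).

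The main obstacle I anticipate is not this separation step, which is routine, but rather the rigorous justification of subtracting the two eigenvalue equations and the ensuing division by $\psi$. One must read the eigenvalue relations in a framework --- for instance the distributional sense of the Schr\"odinger equation recalled via \cite[Theorem~11.8]{LiebLoss} --- in which the kinetic and interaction terms genuinely cancel, so that a purely multiplicative identity remains to which the almost-everywhere non-vanishing of $\psi$ can be applied. Making the product $w\,\psi$ well defined for $w\in X^*$ and carefully handling the measure-zero exceptional sets is the technical heart of the argument.
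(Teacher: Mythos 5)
Your proposal is correct and follows essentially the same route as the paper's own proof: subtract the two eigenvalue equations so that $H_0$ cancels, divide by $\psi$ on the full-measure set where it is non-zero, and conclude that $v_1-v_2$ must be constant because the remaining terms do not depend on $\rr_1$. Your additional Fubini/separation-of-variables step and the remark on the distributional reading of the eigenvalue equations just make explicit what the paper's terser argument leaves implicit.
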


\begin{proof}
If $v_1,v_2$ share a common eigenstate $\psi$ it holds
\begin{align*}
&(H_0 + V[v_1])\psi = E[v_1]\psi, \\
&(H_0 + V[v_2])\psi = E[v_2]\psi.
\end{align*}
Subtraction of the two equations and moving all potential parts that do not depend on $\rr_1$ to the right-hand side gives
\begin{equation}\label{eq:HK2-proof-step}
\begin{aligned}
(v_1(\rr_1)-v_2(\rr_1))\psi = \; & (E[v_1]-E[v_2])\psi \\ &- \sum_{i=2}^N (v_1(\rr_i)-v_2(\rr_i))\psi.
\end{aligned}
\end{equation}
Since we assumed $\psi$ non-zero almost everywhere, we can then divide by $\psi$ and get $v_1(\rr_1)-v_2(\rr_1) = \mathrm{constant}$ (almost everywhere) because the right-hand side does not depend on $\rr_1$.
\end{proof}

Since HK2 states that sharing \emph{any} common eigenstate for two potentials means that they are equal (up to a constant), this of course implies that the potentials share \emph{all} eigenstates because they yield exactly the same Hamiltonian (up to an additive constant that just shifts the spectrum).
The special requirement that the wave function is non-zero (almost everywhere) is guaranteed for a large class of potentials by the \emph{unique-continuation property (UCP) from sets of positive measure}. This property will be further discussed in Section~\ref{sec:UCP}. That zeroes (nodes) in the wave function \emph{are} still allowed on a set of measure zero is important here, since the fermionic many-particle wave functions will exhibit nodal surfaces when particle positions agree. Outside of the continuum setting, for example in finite-lattice systems, such a UCP is \emph{not} at hand and there are actual counterexamples to HK2, were two different potentials share a common eigenstate \cite{penz-DFT-graphs}.

The complete HK result is then obtained by combining the two theorems above. We will assume here that the potential is from the mentioned class that guarantees a non-zero ground state. We should remember that such or similar restrictions will always come into play if we want to show validity of a density-potential mapping in other settings. The statement will be formulated for densities in $\vrepens$, so it automatically holds for $\vreppure$ as well.

\begin{corollary}[HK]
If two potentials share a common ensemble $v$-representable ground-state density, then they are equal up to a constant.
\end{corollary}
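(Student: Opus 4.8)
The plan is to chain HK1 and HK2, with the only real care needed in passing from density matrices to a single eigenstate. First I would unpack the hypothesis that the shared density $\rho$ is ensemble $v$-representable for \emph{both} potentials: by the very definition of $\vrepens$ there exist a ground state $\Gamma_1$ of $H[v_1]$ and a ground state $\Gamma_2$ of $H[v_2]$ with $\Gamma_1,\Gamma_2\mapsto\rho$. Applying HK1 immediately gives that $\Gamma_1$ is also a ground state of $H[v_2]$ (and $\Gamma_2$ of $H[v_1]$), so the two potentials genuinely possess a \emph{common} ground state, which I will call $\Gamma$.

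The next step is to extract from $\Gamma$ a single common eigenstate. Writing $\Gamma=\sum_k\lambda_k|\psi_k\rangle\langle\psi_k|$ with $\lambda_k>0$, I would argue that each component $\psi_k$ is itself a ground state of both Hamiltonians. This follows from linearity of the energy: for either potential $\trace(H[v]\Gamma)=\sum_k\lambda_k\langle\psi_k|H[v]|\psi_k\rangle=E[v]$, while every summand obeys $\langle\psi_k|H[v]|\psi_k\rangle\geq E[v]$ by Rayleigh--Ritz; a convex combination can equal its own lower bound only if every term attains it. Hence each $\psi_k$ is a lowest-eigenvalue eigenstate of $H[v_1]$ and of $H[v_2]$ simultaneously, and I may fix any one of them as the common eigenstate $\psi$.

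Finally I would invoke the standing assumption on the admitted potential class, namely that it guarantees a ground state that is non-zero almost everywhere (the UCP of Section~\ref{sec:UCP}), so that $\psi$ is non-zero a.e. Theorem HK2 then applies directly to $\psi$ and concludes that $v_1$ and $v_2$ differ only by an additive constant, as claimed.

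The one genuinely substantive point — and the step I would guard most carefully — is the middle one: HK1 delivers only a shared \emph{density matrix}, whereas HK2 is a statement about a single eigenstate that is non-zero a.e. Since $\rho$ may lie in $\vrepens\setminus\vreppure$, the common ground state can be \emph{forced} to be a proper mixture, so one cannot simply assume a pure common ground state exists; the decomposition argument above is exactly what repairs this, reducing the ensemble statement to the pure-state hypothesis of HK2. Everything else is formal concatenation.
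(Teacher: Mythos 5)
Your proof is correct and follows essentially the same route as the paper: apply HK1 to obtain a common ground-state density matrix, pick a component $\psi_k$ with $\lambda_k>0$, observe it is a ground state of both Hamiltonians, and conclude with HK2 under the standing assumption that the ground state is non-zero almost everywhere. The only difference is that you explicitly justify, via the Rayleigh--Ritz convexity argument, the step the paper merely asserts (that each component of the mixture with nonzero weight is itself a common ground state), which is a welcome addition rather than a deviation.
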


\begin{proof}
By HK1 there is a density matrix $\Gamma = \sum_k \lambda_k \vert \psi_k \rangle \langle \psi_k \vert$ that is a ground state for both potentials. Since at least one $\lambda_k \neq 0$ and since the corresponding $\psi_k$ is a ground-state wave function for both Hamiltonians, the proof can be completed by HK2.
\end{proof}

This structuring into two separate theorems was already used in \citeauthor{kohn2004hohenberg}~\cite{kohn2004hohenberg}, just in the reverse order, for a brief argument about DFT with magnetization.
Historically, the HK theorem was first given only for the non-degenerate case and was only later extended to include degeneracy~\changed{\cite{ENGLISCH1983,Kohn1985}}. The proof presented here does not suffer from any limitation to non-degenerate ground states.

A final note is directed towards more general DFTs that will be briefly discussed in Section~\ref{sec:abstract-dens-pot} and especially in the forthcoming Part II of this review. For other types of potentials, like vector potentials, the statement in the HK theorem would not necessarily be that the potentials are equal ``up to a constant'', but for example ``up to a gauge transformation''. The set of gauge transformations that are possible without affecting the physical properties of the system then have to be specified within the respective theory.

\section{The unique-continuation property}
\label{sec:UCP}

In this section, we summarize some important results on the unique-continuation property (UCP) of solutions to the Schr\"odinger equation that is heavily used in the context of (mathematical formulation of) HK-type theorems. The current understanding is that the UCP cannot be avoided in a rigorous proof of a HK-type theorem.
The setting will be slightly more general than before and allow for dimensionality $d$ of the spatial part of the single-particle configuration space $\R^d$. The $N$-particle configuration space is then $\R^{n}$ with $n=dN$.

Roughly speaking, the desired UCP result states that under certain conditions on the potentials building up the operators $V$ and $W$ and if a solution $\psi$
to the (distributional) equation $H[v]\psi=0$ vanishes on a set of positive measure, then $\psi$ vanishes everywhere. That the right hand side is zero comes as no restriction here, since the energy $E$ can always be absorbed into the scalar potential $v$.
The usual literature on the UCP shows \emph{strong} UCP, which means that $\psi$ is assumed to \emph{vanish to infinite order} at a point $\rrN_0\in\R^n$ and then the statement follows.
A function $f(\rr)$ is said to vanish to infinite order at $\rrN_0\in\R^n$ if for all $k\ge 1$ there is a $c_k$ such that
\begin{equation*}
     \int_{|\rrN-\rrN_0|<\epsilon} |f(\rrN)|^2\,\mathrm{d}\rrN < c_k \epsilon^k
\end{equation*}
for every $0 <\epsilon <1$. 
Now a very convenient result by \citeauthor{Regbaoui}~\cite{Regbaoui} shows that the UCP on sets of positive measure actually follows from such a strong UCP if the potentials are in $L^{n/2}_\loc$. This work apparently built on \citeauthor{deFigueiredoGossez}~\cite{deFigueiredoGossez} that again rests on an early estimate for general Sobolev spaces by \citeauthor{Ladyzenskaya1968}~\cite[Lemma 3.4]{Ladyzenskaya1968}.
The result and its proof have been repeated in \citeauthor{lammert2018search}~\cite{lammert2018search}. For us that means that any strong UCP can also be used as a UCP on sets of positive measure which is the one needed for the proof of HK2. Yet the traditional strong UCP results, like most notably in \citeauthor{jerison-kenig1985}~\cite{jerison-kenig1985}, also give dimension-dependent constraints on the potentials like $L^{n/2}_\loc$, which approaches $L^\infty$ for growing particle number and is thus too restrictive for our use where singular potentials need to be considered. The saving idea recently came from \citeauthor{Garrigue2018}~\cite{Garrigue2018} and was also extended to more complex systems \cite{Garrigue2019,Garrigue2019b}: To take the special $N$-body structure of the potentials into account and thus avoid any dependence of the constraints on the particle number $N$.

\begin{theorem}[Garrigue's UCP]
Suppose that the potentials are in $L_{\loc}^p(\R^d)$ with $p>2$ for $d=3$ and $p=\max(2d/3,2)$ else. If a solution $\psi$ to the Schr\"odinger equation vanishes on a set of positive measure or if it vanishes to infinite order at any point, then $\psi=0$.
\end{theorem}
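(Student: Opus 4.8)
The plan is to prove the stronger, pointwise statement first---that vanishing to infinite order at a single point forces $\psi\equiv0$---and then to deduce the positive-measure case from it. I would begin by normalising the problem: since the eigenvalue may be absorbed into $v$, it suffices to treat the distributional inequality $|\Delta\psi|\le|U\psi|$ on $\R^n$, $n=dN$, where $U=\sum_i v(\rr_i)+\sum_{i<j}w(\rr_i-\rr_j)$ collects all one- and two-body terms. The decisive preliminary observation is that, although each summand is only assumed to be in $L^p_{\loc}(\R^d)$ with $p$ small, the \emph{same} exponent survives the lift to the full configuration space: a function of $\rr_i$ alone (or, after a linear change of variables, of $\rr_i-\rr_j$) lies in $L^p_{\loc}(\R^n)$ with the identical $p$, because its $p$-th power integrated over a box factorises into a finite $\R^d$-integral times the volume of the remaining coordinates. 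Hence $U\in L^p_{\loc}(\R^n)$ with $p$ \emph{independent of $N$}---this is the whole point, and it is precisely what the classical $L^{n/2}_{\loc}$ requirement fails to exploit.

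The analytic engine is a Carleman estimate for the Laplacian of dual-Sobolev type, schematically
\begin{equation*}
\bigl\| e^{\tau\phi}u \bigr\|_{L^{q}} \le C\,\bigl\| e^{\tau\phi}\Delta u \bigr\|_{L^{q'}},
\qquad q=\tfrac{2n}{n-2},\ q'=\tfrac{2n}{n+2},
\end{equation*}
with a weight $\phi$ singular at the vanishing point and a constant $C$ uniform in the large parameter $\tau$, as in \cite{jerison-kenig1985}. Applying it to $u=\chi\psi$ with a cutoff $\chi$ supported on a small ball and inserting the equation reduces everything to bounding $\|e^{\tau\phi}U\psi\|_{L^{q'}}$ by a small multiple of $\|e^{\tau\phi}\psi\|_{L^{q}}$, so that the potential term can be absorbed into the left-hand side once $\tau$ is large and the ball small. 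Iterating the resulting inequality propagates the infinite-order vanishing across the ball, and a standard connectedness argument then spreads $\psi\equiv0$ to all of $\R^n$.

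The heart of the matter---and the step I expect to be the main obstacle---is this absorption bound for $U\psi$, which is where the $N$-independent threshold $p=\max(2d/3,2)$ (strictly $p>2$ when $d=3$) must be produced. A naive H\"older estimate in all $n$ variables would demand $U\in L^{n/2}_{\loc}$ and reintroduce the fatal $N$-dependence. Instead I would treat each summand in its own low-dimensional slice: for the term $v(\rr_1)$, apply H\"older only in the $d$ variables $\rr_1$, using $v\in L^p(\R^d)$, and control the integrability of $\psi$ in $\rr_1$ through the $d$-dimensional Sobolev embedding $H^1(\R^d)\hookrightarrow L^{2d/(d-2)}(\R^d)$---legitimate because $\psi\in\psispace$ has finite kinetic energy, hence $H^1(\R^n)$ regularity in every particle block simultaneously. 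Balancing this $d$-dimensional Sobolev exponent against the Carleman exponents $q,q'$, together with a mixed norm in the remaining blocks, is exactly what dictates the threshold $2d/3$ and, crucially, keeps it uniform in $N$; the delicate bookkeeping of these mixed norms across all $N$ particle blocks is the genuine technical core.

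Finally, for the positive-measure hypothesis I would reduce to the infinite-order case: at a density point of the zero set, the Schr\"odinger inequality with an $L^p_{\loc}$ potential forces $\psi$ to vanish to infinite order there, by a de Figueiredo--Gossez--type argument \cite{deFigueiredoGossez,Regbaoui}, after which the strong UCP just established applies. The standard form of that reduction asks for $L^{n/2}_{\loc}$ control; to preserve $N$-independence one must re-run it in the same sliced setting, so that only the structural $L^p_{\loc}$ bound is invoked. This is the one place where the positive-measure and infinite-order branches of the statement are joined, and verifying that the slicing philosophy carries the reduction through is the last ingredient of the proof.
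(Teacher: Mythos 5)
You should first be aware that the paper does not prove this theorem at all: it is quoted as an external result of \citeauthor{Garrigue2018}~\cite{Garrigue2018}, and the surrounding text of Section~\ref{sec:UCP} only describes the architecture of that cited proof --- strong UCP via Carleman estimates in the tradition of \citeauthor{jerison-kenig1985}~\cite{jerison-kenig1985}, the passage from vanishing on a set of positive measure to vanishing to infinite order following \citeauthor{deFigueiredoGossez}~\cite{deFigueiredoGossez} and \citeauthor{Regbaoui}~\cite{Regbaoui}, and, as the decisive new ingredient, exploiting the $N$-body structure of the potential so that the integrability threshold does not grow with $n=dN$. Your outline reproduces exactly this architecture, and your diagnosis of where the classical route fails (a na\"ive H\"older bound in all $n$ variables would demand $U\in L^{n/2}_{\loc}(\R^n)$, whose exponent grows with $N$) is the same one the paper gives.

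Judged as a proof rather than as a road map, however, the proposal stops short precisely at the steps that constitute Garrigue's actual contribution. The schematic Carleman estimate you write down, with the conjugate exponents $q=2n/(n-2)$ and $q'=2n/(n+2)$ in all $n$ variables, is not the estimate that makes the slicing argument close: absorbing a term $v(\rr_1)\psi$ by H\"older only in the $\rr_1$-block requires a Carleman inequality controlling \emph{mixed} norms (an $L^q$-type norm in the distinguished $d$ variables against an $L^2$-type norm in the remaining ones), with constants uniform in the Carleman parameter; proving such an anisotropic estimate, and verifying that the absorption then succeeds exactly at the $N$-independent threshold $p=\max(2d/3,2)$ (respectively $p>2$ for $d=3$), is the technical core that you explicitly defer as ``delicate bookkeeping.'' The same caveat applies to the positive-measure branch: the Ladyzenskaya--de~Figueiredo--Gossez--Regbaoui reduction, as recorded in Refs.~\citenum{Ladyzenskaya1968,deFigueiredoGossez,Regbaoui} and \citenum{lammert2018search}, asks for $L^{n/2}_{\loc}$ control, and re-running it with only the sliced $L^p_{\loc}$ information is asserted rather than carried out. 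So your proposal is a faithful outline of the proof the paper cites, but none of its hard steps are executed; completing it would essentially amount to reproducing the analysis of Ref.~\citenum{Garrigue2018}.
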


The most relevant case here is obviously $d=3$ which means that the potentials need to be in $L_{\loc}^p(\R^3)$ with $p>2$ but exactly $p=2$ is not enough yet. This clearly does not fit our potential space $X^* = L^{3/2}(\R^3) + L^\infty(\R^3)$, so while this UCP result is the best one available, it cannot be used for a HK2 theorem that covers the whole potential space of DFT in the formulation discussed here.
\citeauthor{Lieb1983}~\cite{Lieb1983} also remarked on the UCP in the context of the HK theorem, which ``is believed to hold'' for potentials in $X^*$, however in a weaker form that is not sufficient for the current purpose. So whenever we state that the HK holds in standard DFT, we actually mean under the given restrictions on the potentials.

\section{Hierarchy of density functionals}
\label{sec:functionals}

The first part of the \changed{HK} theorem, HK1, analogously holds in many different varieties of DFT (that will be explored in Part II), simply because its validity just depends on the form of the energy functional. 
HK1 then ensures that we can map from pure-state $v$-representable ground-state densities $\rhogspure$ to ground-state wave functions $\psi[\rhogspure]$ and from ensemble $v$-representable ground-state densities $\rhogsens$ to ground-state density matrices $\Gamma[\rhogsens]$. This makes it possible to define the HK1 functionals
\begin{subequations}\label{eq:F-HK-def}
\begin{alignat}{3}
    \label{eq:F-HKpure-def}
    &\FHKpure[\rhogs] = \langle \psi[\rhogs] |H_0 |\psi[\rhogs] \rangle &&\,\,\,\, \text{on}\; \vreppure  \\
    \label{eq:F-HKens-def}
    &\quad \text{and} \nonumber \\
    &\FHKens[\rhogs] = \trace (H_0 \Gamma[\rhogs]  ) &&\,\,\,\,\text{on}\; \vrepens
\end{alignat}
\end{subequations}
as the energy contribution only from the internal parts $H_0$ of the Hamiltonian. 
\changed{The universal nature of such functionals, being independent of any external $v$, justifies the usual attribution as \textit{universal} functionals.
It is then} possible to determine also the internal energy contributions for any state with density $\rhogs$ just from $\rhogs$.
To get the total ground-state energy \eqref{eq:E-def} with the help of the functional above, it is enough to vary over $v$-representable densities alone, instead of the much larger set of wave functions. We can write
\begin{equation}\label{eq:E-def-with-FHK}
\begin{aligned}
E[v] &= \inf\{ \langle \psi |H_0|\psi \rangle + \langle v,\rho_\psi \rangle \mid \psi \in \psispace, \|\psi\|=1 \} \\
&= \inf_{\rhogsprime}\{ \langle \psi[\rhogsprime],H_0\psi[\rhogsprime] \rangle + \langle v,\rhogsprime \rangle \} \\
&= \inf_\rhogsprime\{ \FHKpure[\rhogsprime] + \langle v,\rhogsprime \rangle \} \quad\text{on}\; X^*,
\end{aligned}
\end{equation}
or equivalently with $\FHKens$. 
We see already that there is a certain ambiguity in which density functional to use in the definition of $E[v]$. The other density functionals presented here will all have the property that they give the correct ground-state energy when applied in Eq.~\eqref{eq:E-def-with-FHK} which makes them all \emph{admissible} functionals \cite{kvaal-helgaker2015admissible}. Yet, they will differ with respect to their mathematical properties and we thus aim for the one with the best features.

The first problem here is that the densities to be considered in the variational problem are limited to those that are actual ground-state densities (\vrep), because else $\FHK[\rhogs]$ is left undefined, and we already learned in Section~\ref{sec:rep} that \vrep{} is not an explicitly characterized set.
Apart from that, HK1 just states the existence of a map $\rhogs \mapsto \psi$ or $\Gamma$ without giving any hints towards a constructive scheme. A first step to overcome these problems is to inspect Eq.~\eqref{eq:hk1}. 
This suggests the definition of another pair of density functionals that goes under the name of ``constrained search'',
\begin{subequations}\label{eq:F-CS-def}
\begin{alignat}{3}
    \label{eq:F-CSpure-def}
    &\FCSpure[\rho] = \inf_{\psi \mapsto \rho} \langle \psi| H_0 | \psi \rangle   &&\quad\text{on}\; \Nrep \; \text{and}\\
    \label{eq:F-CSens-def}
    &\FCSens[\rho] = \inf_{\Gamma \mapsto \rho} \trace (H_0\Gamma) &&\quad\text{on}\; \Nrep.
\end{alignat}
\end{subequations}
The domain is now the larger, convex, and explicitly defined $\Nrep$ in both cases. Note that the literature mostly denotes those functionals as $\FCSpure=\FLL$ (``Levy--Lieb'' \cite{Levy79,Lieb1983}) and $\FCSens=\FDM$ (from ``density matrix'' \cite{Lieb1983}). A recent, comprehensive study of these functionals can be found in \citeauthor{lewin2019universal}~\cite{lewin2019universal}.
Since the density is limited to the set $\Nrep$ that guarantees finite kinetic energy, the infima in Eq.~\eqref{eq:F-CS-def} are always attained, though not necessarily by a possible ground state (if $\rho$ is not $v$-representable), and can thus be replaced by minima in both cases~\cite[Theorem~3.3]{Lieb1983}. 
The convex combination of pure-state projections into density matrices translates to the functionals, so that $\FCSens$ is the convex envelope of $\FCSpure$ \cite[Proposition~18, the article treats DFT on a lattice but the statement and the proof of proposition remains exactly the same in the continuum case.]{penz-DFT-graphs}. This automatically ensures that $\FCSens$ is convex, a fact that can also be concluded from observing that $\Gamma \mapsto \rho$ linear \cite[Section~4.B]{Lieb1983}.

Since these density functionals appear in the optimization problem that determines the ground-state energy and density, like in Eq.~\eqref{eq:E-def-with-FHK}, convexity is of great importance because only for a convex functional can we be sure that identifying any \emph{local} minimum also means that a \emph{global} minimum has been found. So while we now know that $\FCSens$ is convex, the previous functional $\FHKpure$ does not even have a convex domain and therefore cannot be convex.
\citeauthor{levy1982}~\cite{levy1982} and \citeauthor{Lieb1983}~\cite{Lieb1983} also gave arguments for the non-convexity of $\FCSpure$. Since $\FCSens = \conv \FCSpure$, any density where $\FCSens[\rho] \neq \FCSpure[\rho]$ already shows non-convexity of $\FCSpure$. But this is equivalent to saying that $\rho$ is ensemble $v$-representable while it is \emph{not} pure-state $v$-representable, so $\rho\in\vrepens\setminus\vreppure$ \cite[Proposition~21]{penz-DFT-graphs}.

Note especially, that HK1 was necessary to define $\FHK$, but is not needed any more for the constrained-search functional $\FCS$. Being able to define a universal constrained-search functional, one that is independent of the potential like in Eq.~\eqref{eq:F-CS-def}, already fully facilitates the proof of HK1 and thus implies this result. A potential-independent constrained-search functional \emph{already implicitly includes HK1}. This implication was proven by \citeauthor{Levy79}~\cite{Levy79} along the lines of the usual HK proof and is mentioned in textbooks like \citeauthor{parr}~\cite[after their Eq.~(3.4.4)]{parr} and \citeauthor{tsuneda}~\cite[after Eq.~(4.5)]{tsuneda}. Speaking generally though, a constrained search is just as feasible if the constrained-search functional also depends on the external potential $v$ (although it would not be universal), so indeed this approach is more general than relying on HK1. Such a case turns up in CDFT when the current variable is the total current that itself depends on the vector potential (see Part II of this review for more on this).

By employing the constrained-search functional, the ground-state energy from Eq.~\eqref{eq:E-def} can now be rewritten again as
\begin{equation*}
\begin{aligned}
E[v] &= \inf\{ \langle \psi | H_0 |\psi \rangle + \langle v,\rho_\psi \rangle \mid \psi \in \psispace, \|\psi\|=1 \} \\
&= \inf\{ \FCSpure[\rho_\psi] + \langle v,\rho_\psi \rangle \mid \psi \in \psispace, \|\psi\|=1 \} \\
&= \inf_\rho \{ \FCSpure[\rho] + \langle v,\rho \rangle \} \quad\text{on}\; X^*,
\end{aligned}
\end{equation*}
or equivalently with $\FCSens$, where minimization is now performed over $\Nrep$.

When looking at non-interacting systems, the definitions of $\FHK$, Eq.~\eqref{eq:F-HK-def}, and $\FCS$, Eq.~\eqref{eq:F-CS-def}, involve only the kinetic-energy operator $T$ instead of $H_0$. We will then denote these functionals with a zero superscript, $\FHK^0, \FCS^0$, etc., that indicates that non-interacting systems are considered. A further functional then comes into play that is defined like $\FCSpure$, but where only Slater determinants are considered as wave functions. 
We define on $\Nrep$,
\begin{equation*}
    \FSD[\rho] = \inf_{\phi \mapsto \rho}\left\{ \langle \phi | T | \phi  \rangle \mid \text{$\phi$ is a Slater determinant} \right\}.
\end{equation*}
The usual name in the literature is $\FSD = T_S$.
This functional is of importance because it is the one used in Kohn--Sham theory which will be discussed in Section~\ref{sec:KS}. 
In their original article, \citeauthor{KS1965}~\cite{KS1965} implicitly set $\FSD=\FHKpure^0$ for all non-interacting pure-state $v$-representable densities, which has been noted to be wrong because of possible degeneracy \cite[Section~4.C]{Lieb1983}. On the other hand, for non-degenerate ground states $\phi$, which by necessity are always determinants in non-interacting systems, it holds that $\FSD[\rho_\phi]= \FCSpure^0[\rho_\phi]= \FHKpure^0[\rho_\phi]$, and else $\FSD\geq \FCSpure^0$. Nevertheless, for practical purposes, $\FSD$ usually takes up the role of the density functional when defining the energy functional in a non-interacting setting.

The transformation from any density functional $F_\bullet$ for an interacting system from above to the energy functional,
\begin{equation}\label{eq:E-def-from-any-F}
E[v] = \inf_\rho \{F_\bullet[\rho]+\langle v,\rho \rangle\} \quad\text{on}\; X^*,
\end{equation}
is called the \emph{convex conjugate} or Legendre--Fenchel transformation \cite[Section~2.1.4]{Barbu-Precupanu}.
There is also a way to reverse the transformation and we define
\begin{equation}\label{eq:F-def}
F[\rho] = \sup_v \{E[v]-\langle v,\rho \rangle\}\quad\text{on}\; X.
\end{equation}
This $F$ is the famous \emph{Lieb functional} \cite{Lieb1983}, yet another density functional, but this time the last one to be defined in standard DFT. It is the \emph{biconjugate} of any $F_\bullet$ considered before. Defined this way, both $E$ and $F$ are lower-semicontinuous and $E$ is concave while $F$ is convex and has the property $F\leq F_\bullet$ \cite[Proposition~2.19]{Barbu-Precupanu}. Actually, as a biconjugate, $F$ is the largest convex and lower semicontinuous functional that fulfills $F\leq F_\bullet$ which makes it the \emph{convex envelope} of $F_\bullet$. 
The domain is now the whole $X=L^1(\R^3)\cap L^3(\R^3)$, but automatically $F[\rho] = \infty$ for all densities that are not in $\Nrep$~\cite[Theorem~3.8]{Lieb1983}, while at the same time $F[\rho] < \infty$ if $\rho\in\Nrep$~\cite[Theorem~3.9 and the following Remark]{Lieb1983}. Let the \emph{effective domain} `$\dom$' of a convex functional be the elements from its domain where it is finite, then this means that $\dom F = \Nrep$.
Having reached $F$, it does not matter any more which (admissible) functional has been used in Eq.~\eqref{eq:E-def-from-any-F}, which means the convex envelopes of all the functionals above agree. Conversely, the Legendre--Fenchel transformation can also be utilized to go back from $F$ to $E$ \cite[Theorem~2.22]{Barbu-Precupanu},
\begin{equation}\label{eq:E-LF}
E[v] = \inf_\rho \{ F[\rho] + \langle v,\rho \rangle \}.
\end{equation}

We already noted that $F$ is convex and lower-semicontinuous, which are both important properties if we want to use the variational problem $E[v] = \inf_\rho \{ F[\rho] + \langle v,\rho \rangle \}$ to find a minimizing density. The same properties come into play when defining the minimizers by differentiation in Section~\ref{sec:differentials}.
From the definition of $F$ it follows directly that
\begin{equation*}
    E[v] \leq F[\rho] + \langle v,\rho \rangle,
\end{equation*}
a version of the Young inequality.
Equality in the above estimate holds if the density \emph{is} the ground-state density $\rhogs$ for the potential $v$,
\begin{equation*}
    E[v] = F[\rhogs] + \langle v,\rhogs \rangle.
\end{equation*}
For $\FCSpure$ the converse holds too: If $E[v] = \FCSpure[\rho] + \langle v,\rho \rangle$ then $\rho$ is a ground-state density within $\vreppure$ for the potential $v$ and further $\FHKpure[\rho] = \FHKens[\rho] = \FCSpure[\rho] = \FCSens[\rho]=F[\rho]$ \cite[Theorem~3.10]{Lieb1983}. But what about using the more general functional $F$ for the variational principle like in Eq.~\eqref{eq:E-LF}? Can we find a real ground state like this or will this variational principle yield additional artificial solutions because it is too general? Because it is the convex envelope of the other functionals, it cannot produce a functional value below the ground-state energy, but it could produce a minimizing density where there are no $v$-representable ground-state densities!
The problem is solved if we allow for ensembles of ground states: An ``amusing fact'' in \citeauthor{Lieb1983}~\cite[Eq.~(4.5)]{Lieb1983} gives $F=\FCSens$ on $\Nrep$, which effectively means $F=\FCSens$ since we can just set $\FCSens[\rho]=\infty$ outside of its domain $\Nrep$ to achieve equality globally on $X$. So any minimizer of $F+\langle v,\cdot \rangle$ is also one of $\FCSens+\langle v,\cdot \rangle$ and it is further the convex combination of ground states for the potential $v$. Consequently, when talking about ground states in the context of the functional $F$, we will always actually mean \emph{ensembles} of possibly degenerate ground states.

When comparing the functionals on $X$, we just set them to $\infty$ whenever we are outside their domains. The following hierarchy can be set up and is further laid out in Table~\ref{tab:DFT-functionals}.
\begin{equation*}
    F = \FCSens \leq \begin{array}{c} \FCSpure (\leq \FSD) \\ \FHKens \end{array} \leq \FHKpure.
\end{equation*}
Here, $\FSD$ appears in parentheses since it only comes into play in the non-interacting setting where we can perform the same type of transformations and have $F^0[\rho]$ and $E^0[v]$.

\begin{table}[h]
\begin{tabular}{lll|lll}
    \hline\hline
    $F_\bullet$ & \multicolumn{2}{l|}{convex} & domain & \multicolumn{2}{l}{convex} \\\hline
    $\FHKpure$ & no & & \vreppure & no \\
    $\FHKens$ & ? & & \vrepens & ? & \multirow{2}{*}{$\Downarrow$ cl} \\
    $\FCSpure$ & no & \multirow{2}{*}{$\Downarrow$ conv} & \Nrep & yes & \\
    $\FCSens$ & yes & & \Nrep & yes & \\
    $F$ & yes & & $L^1 \cap L^3$ & yes & \\
    \hline\hline
\end{tabular}
\caption{\label{tab:DFT-functionals}
The table shows the relations between the functionals discussed in Section~\ref{sec:functionals}. From $\FHKpure$ to $\FHKens$ the domain gets extended to $\vrepens$ while they agree on $\vreppure$. From $\FHKens$ to $\FCSpure$ the domain gets closed (cl) within $L^1\cap L^3$ and from $\FCSpure$ to $\FCSens$ the functional itself gets convexified (conv) while the domain remains the same. Finally, $F$ is just equal to $\FCSens$ on $\Nrep$.
}
\end{table}

\section{Density-potential mappings from differentials}
\label{sec:differentials}

In the previous section it was stated that in order to get the ground-state density of any system we have to find a solution to the variational problem
\begin{equation}\label{eq:E-F-equ2}
    E[v] = \inf_\rho \{ F[\rho] + \langle v,\rho \rangle \},
\end{equation}
now relying on the density functional $F$ from Eq.~\eqref{eq:F-def}.
To find the global minimum of a \emph{convex and lower-semicontinuous} functional we can perform differentiation, i.e., demand that the differential of $F[\rho] + \langle v,\rho \rangle$ with respect to $\rho$ must equal zero at the position of a ground-state density $\rhogs$.

\begin{figure}[ht]
\centering
\begin{tikzpicture}[scale=.65]
    \draw[->] (-4,-1) -- (4,-1)
        node[right] {$\rho$};
    \draw[domain=-4:0,smooth,variable=\x] plot ({\x},{(\x-2)*(\x-2)/8});
    \fill (0,.5) circle [radius=2pt];
    \fill (0,-1) circle [radius=2pt] node[below] {$\rho_0$};
    \draw[dotted,->] (0,.5) -- (0,4.5) node[above] {$\infty$};
    \draw[dashed] (-4,.5) -- (4,.5);
    \draw[dashed] (-4,2.5) -- (4,-1.5);
    \draw[dashed] (-4,-2.5) -- (4,3.5);
    \draw[dashed] (-1.412,-2.5) -- (1.88,4.5);
\end{tikzpicture}
\caption{Example of a convex and lower-semicontinuous function with a discontinuity at $\rho_0$ and some elements from the subdifferential displayed as linear continuous tangent functionals at $\rho_0$, represented by dashed lines.}
\label{fig:subdiff}
\end{figure}
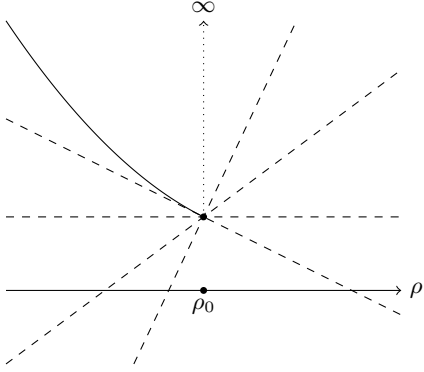

The suitable notion of differentiation here is the subdifferential $\underline\partial F$ that gives the \emph{set} of all linear continuous tangent functionals to a convex functional $F$ at a given density $\rho$,
\begin{equation*}
    \underline\partial F[\rho] = \{ v\in X^* \mid \forall \rho' \in X : F(\rho) \leq F(\rho')+ \langle v,\rho-\rho' \rangle \}.
\end{equation*}
It is always well-defined, since the set $\underline\partial F[\rho]$ can contain many elements, in case the functional $F$ has a kink (like the example shown in Fig.~\ref{fig:subdiff}), or can even be empty. Finally, if it contains exactly one element, we found a \emph{unique} potential yielding that ground-state density. In any case, the variational problem \eqref{eq:E-F-equ2} has a minimizer $\rhogs$ if and only if the following condition is fulfilled \cite[Proposition~2.33]{Barbu-Precupanu},
\begin{equation}\label{eq:v-subdiff}
    \left.\underline\partial(F[\rho] + \langle v,\rho \rangle)\right|_{\rho=\rhogs} \ni 0 \Longleftrightarrow \langle v,\cdot \rangle \in -\underline\partial F[\rhogs].
\end{equation}
In what follows, we identify $v$ with the functional $\langle v,\cdot \rangle$ whenever the context implies a functional on density space instead of a potential on configuration space, so Eq.~\eqref{eq:v-subdiff} can be written $v \in -\underline\partial F[\rhogs]$. The potential as the subdifferential of the density functional means that potentials $v$ are from the dual of the space of densities like already noted in Section~\ref{sec:rep}. This general principle is not always respected in more complex versions of DFT, as we will see in Section~\ref{sec:abstract-dens-pot} and discuss further in Part II of this review.

If the set $\underline\partial F[\rhogs]$ is non-empty then there is at least one potential $v\in X^*$ that yields the given ground-state density. The set of all densities where $\underline\partial F[\rhogs] \neq \emptyset$ is called the domain of the subdifferential, so it follows that $\dom \underline\partial F= \vrepens$. Note that by a theorem of convex analysis \cite[Corollary~2.44]{Barbu-Precupanu}, $\dom \underline\partial F$ is dense in $\dom F$, so $\vrepens$ is dense in $\Nrep$, a fact already expressed with $\Nrep$ being the closure of $\vrepens$ in Section~\ref{sec:rep}.

The meaning of a valid HK theorem for a class of densities is that they can all be mapped as ground-state densities back to a unique potential (modulo a constant) and consequently $-\underline\partial F[\rhogs] = \{v+c \mid c\in\R\}$. By eliminating the (physically unimportant) constant potentials from the potential space, the subdifferential of a $v$-representable density is precisely $-\underline\partial F[\rhogs] = \{v\}$ if the HK theorem holds. If, on the other hand, $F$ is assumed differentiable, then the directional derivative 
$-\der F[\rhogs] = v$ anyway always maps to a unique potential. One thus has a well-defined map from densities in $\vrep$ to the corresponding potentials, exactly the content of the HK theorem! But where did it enter? The HK theorem is here a \emph{consequence} from the assumption of differentiability of $F[\rho]$ at $v$-representable densities. The situation will be summarized diagrammatically in Section~\ref{sec:summary}.

Because any potential $v$ that we determine by Eq.~\eqref{eq:v-subdiff} will also be the maximizer in the conjugate variational problem
\begin{equation*}
    F[\rhogs] = \sup_v \{E[v]-\langle v,\rhogs \rangle\},
\end{equation*}
we can just as well say the same with the superdifferential of the concave functional $E$,
\begin{equation}\label{eq:E-superdiff}
    \left.\overline\partial(E[v'] - \langle v',\rhogs \rangle)\right|_{v'=v}=0 \Longleftrightarrow \rhogs \in \overline\partial E[v].
\end{equation}
The right hand side, $\rhogs \in \overline\partial E[v]$, means to find a density (or possibly many) that comes from a wave-function that minimizes the total energy including $v$. It is thus a conceptual shortcut to map from potentials to ground-state densities without any reference to an underlying wave function or Schr\"odinger equation.
The situation of a set $\overline\partial E[v]$ with more than one element is known from degeneracies of the Hamiltonian $H_0 + V[v]$, where different linearly independent ground states with eventually different densities all have the same eigenvalue.

We showed in this section the important role of the generalized concepts of sub/super\-differentials in the context of DFT, because indeed the functionals from Section~\ref{sec:functionals} can \emph{not} be assumed differentiable as \citeauthor{vanLeuuwen2003key}~\cite{vanLeuuwen2003key} has demonstrated for the $\FHK$ functionals and \citeauthor{Lammert2007}~\cite{Lammert2007} for $\FCS$. The reason for non-differentiability even of $\FCS$ is that at any $\rho$ the functional $F[\rho + \delta\rho]$ is infinite for various, arbitrarily small shifts $\delta\rho$ that lead out of $\Nrep$, even if the normalization of the density is kept constant. This happens by infinitely increasing the internal energy through tiny oscillations of the density. A possible way to prevent that is to limit the density space $X$ so that such shifts $\delta\rho$ are not possible any more and \citeauthor{Lammert2007}~\cite{Lammert2007} actually shows this for the Sobolev space $H^2(\R^3)$ when $\rho$ is also assumed $v$-representable. Another way is to establish a coarse-grained model for DFT in which $F$ really becomes differentiable and every density is ensemble $v$-representable \cite{lammert2010-coarse-grained}. 
In the following section, in accordance with the vast majority of the literature, we will assume functional differentiability of $F$ and consequently $v$-representability. This strong assumption can be justified \textit{a posteriori}, as discussed later in Section~\ref{sec:regDFT}, when a regularization procedure is applied.

\section{Linking to a reference system: the Kohn--Sham scheme}
\label{sec:KS}

In Section~\ref{sec:functionals} it was noted that a functional might be introduced for an interacting or a non-interacting system. This means the respective Hamiltonian has the internal part $T + \lambda W$ with $\lambda \in \{0,1\}$. We will now write $F^1$ and $F^0$ to differentiate clearly between those two situations.
We then introduce the difference functional $\FHxc = F^1-F^0$, which just corresponds to the internal-energy difference between the interacting and the non-interacting system and that will later be linked to the Hartree-exchange-correlation potential $\vHxc$. This potential effectively compensates for the Hartree-mean-field interaction as well as `exchange' and `correlation' effects. The idea behind introducing this auxiliary non-interacting system is that the energy difference between the (numerically tractable) non-interacting system and the (numerically unfeasible) interacting system is small and can be efficiently approximated.
Since the reference system is non-interacting, $\FSD$ can be employed for $F^0$ if degeneracy for the ground state does not have to be taken into account, like it was mentioned in Section~\ref{sec:functionals}, and this switchover is performed in most practical situations. Then the energy functional for the full system is
\begin{equation*}
\begin{aligned}
    E^1[v] &= \inf_\rho \{ F^1[\rho] + \langle v,\rho \rangle \} \\
    &= \inf_\rho \{ F^0[\rho] + \FHxc[\rho] + \langle v,\rho \rangle \} \\
    &= \inf_\phi \{ \langle \phi| T |\phi \rangle + \FHxc[\rho_\phi] + \langle v,\rho_\phi \rangle\}.
\end{aligned}
\end{equation*}
In the last step the variation is changed from $\Nrep$ densities to single Slater determinants $\phi$, the minimizer -- if it exists -- is then the Kohn--Sham Slater determinant. In order to link this to a partial differential equation for the orbitals $\varphi_i$ constituting $\phi$, the Kohn--Sham equation, variation of the energy expression above with respect to $\varphi_i$ is performed under the constraint that all the $\varphi_i$ stay normalized. This means $\rho_\phi(\rr) = \sum_{i=1}^N \sum_\sigma |\varphi_i(\rr\sigma)|^2$ always stays in $\Nrep$, but generally the issue of non-differentiability from Section~\ref{sec:differentials} remains. The resulting equation is a one-particle Schr\"odinger equation with effective potential $v_s$ and eigenstates $\varphi_i$,
\begin{equation}\label{eq:KS}
    \left( -\tfrac{1}{2}\nabla^2 + v_s(\rr)\right) \varphi_i(\rr\sigma) = \varepsilon_i \varphi_i(\rr\sigma).
\end{equation}
On the other hand this approach does not lead to the effective potential $v_s$ for the Kohn--Sham equation right away, but requires the additional, computationally challenging step of extracting the effective potential from the variation of $\FHxc$ with respect to the orbitals (OEP integral equation\cite{kummel2003optimized}).

To have a well defined $\FHxc[\rho] = F^1[\rho]-F^0[\rho]$, the $\rho$ must be both, interacting and non-interacting $v$-representable. Both systems then share the same ground-state density $\rho$ when the different external potentials
\begin{equation}\label{eq:v-vs}
    v \in -\underline\partial F^1[\rho] \quad \text{and} \quad v_s \in -\underline\partial F^0[\rho]
\end{equation}
are assigned to them.
That the density $\rho$ is simultaneously interacting \emph{and} non-interacting $v$-representable is tacitly assumed here, else one of the subdifferentials above is empty. This means that actually the $v$-representability problem from Section~\ref{sec:rep} shows up at this point. A purported solution \cite{gonis2014functionals,gonis2016reformulation,gonis2019interacting} rests on an ill-founded notion of differentiability where the functionals are extended to distributions, but with an incorrect application of the calculus of distributions (see, e.g., Eq.~(0.24) in \citeauthor{gonis2014functionals}~\cite{gonis2014functionals}).

The usual rationale of DFT is to assume that the potentials from Eq.~\eqref{eq:v-vs} exist and are unique (modulo a constant; after all the latter is the content of the HK theorem).
The difference $\vHxc = v_s-v$ is then known as the Hartree-exchange-correlation potential: what needs to be added to the fixed external potential $v$ in order to simulate all interactions in an non-interacting system. Note that such missing effects from interactions do not stem exclusively from the $W$-term in $F^1$, but also from the different kinetic energy contributions between the interacting and non-interacting system. Nevertheless, the usual understanding is that most of the kinetic energy contributions can already be captured by a non-interacting system (with an uncorrelated wave function) and that they thus practically cancel between $F^1$ and $F^0$ when we calculate
\begin{equation}\label{eq:vxc-subdiff-F}
    \vHxc[\rho] = v_s-v \in \underline\partial F^1[\rho] - \underline\partial F^0[\rho].
\end{equation}
At this point a problematic discrepancy is introduced, since the subdifferential is not linear and thus $\vHxc$ and $ \underline\partial\FHxc$ need not match.
If $\vHxc$ cannot be determined as $ \underline\partial\FHxc$ we are left with the necessity of individually solving the inverse problems $\rho \mapsto v$ and $\rho \mapsto v_s$ in Eq.~\eqref{eq:vxc-subdiff-F} for both systems, interacting and non-interacting. In practice this means one cannot benefit from finding good approximations to $\FHxc$ which are the most important elements of applied DFT.

A possible circumvention lies in a conceptual shift from describing a system in terms of energies to forces. The ground state is then characterized by a certain force-balance equation that can be equally found in non-equilibrium settings, just with an additional dynamical term \cite{tchenkoue2019force,stefanucci2013}. At a density that is simultaneously interacting and non-interacting $v$-representable and where the wave function has a sufficient regularity, the force-balance equation can be employed to derive $\vHxc$ as the solution of a Poisson equation instead of a functional derivative \cite{ruggenthaler2022force}. An alternative derivation for this was already given earlier using line integrals describing the work it takes to move an electron from infinity against the force field of the exchange-correlation hole charge \cite{harbola1991local,slamet1994force}.

Yet, we will proceed here for the sake of argument by \emph{assuming} differentiability for now. 
Since the functional derivative $\der$ is linear and it holds
\begin{equation}\label{eq:vxc-diff-F}
\begin{aligned}
    \vHxc[\rho] &= v_s-v = - \der F^0[\rho] +\der F^1[\rho]  \\
    &= \der (F^1[\rho] - F^0[\rho]) = \der\FHxc[\rho].
\end{aligned}
\end{equation}
Also, several important properties that the Hxc potential needs to have are automatically fulfilled when they are functional derivatives \cite{gaiduk2009potential_func_deriv}, which is especially relevant for functional approximations to $\vHxc$.

The \changed{Kohn--Sham} scheme is now introduced in order to find an unknown ground-state density $\rhogs$ of an interacting system by starting from an initial guess $\rho_0$ and by using $\vHxc$ (in practice a suitable approximation to it) as the connection between the interacting system and a non-interacting reference system.
To this end, rewrite Eq.~\eqref{eq:v-vs} with assumed differentiability as $\der F^1[\rhogs]+v=0$ and $\der F^0[\rhogs]+v_s=0$ and set the two equations equal,
\begin{equation*}
    \der F^1[\rhogs]+v = \der F^0[\rhogs]+v_s.
\end{equation*}
Now, apart from the fixed external potential $v$ of the interacting system, all variables in this equation still remain generally unknown: the effective potential of the non-interacting system $v_s$ and, especially, the density $\rho$ of both systems that we would like to determine. The trick lies in introducing sequences $\rho_i \to \rhogs$, $v_i \to v_s$ and define an update rule,
\begin{equation}\label{eq:vKSplus1}
    v_{i+1} = v + \der F^1[\rho_i] - \der F^0[\rho_i] = v + \vHxc[\rho_i].
\end{equation}
We see immediately that if $\rho_i$ has converged to the correct ground-state density $\rhogs$ of the interacting system, then $v + \der F^1[\rho_i]=0$ and the remaining equation tells us that indeed $v_{i+1}$ is the potential that gives the same density $\rhogs$ in the non-interacting system. The next step after Eq.~\eqref{eq:vKSplus1} in the Kohn--Sham iteration lies in determining the density $\rho_{i+1}$ that comes from $v_{i+1}$ in the non-interacting system (which is comparably easy achieved by solving the corresponding Kohn--Sham equation \eqref{eq:KS}) and then iterate. Convergence problems are a big issue within this iteration scheme and have been dealt with by either damping the iteration step from $\rho_i \to \rho_{i+1}$ to $\rho_i \to \rho_i + \mu ( \rho_{i+1} - \rho_i )$, $\mu \in (0,1)$, or mixing several of the previous steps $\{\rho_i\}$ into the result $\rho_{i+1}$ \cite{pulay1980convergence,cances2000damping,cances2021convergence}.  Guaranteed convergence has been studied and proven for the finite-lattice case \cite{penz2019guaranteed,penz2020erratum,Kvaal2022-MY} by combining an optimal damping step and a regularization technique \cite{Kvaal2014,Kvaal2022-MY}, the latter truly making $F$ differentiable and $E$ a strictly concave functional. This solves the problem of defining $\vHxc$ in Eq.~\eqref{eq:vxc-diff-F} and yields a curvature bound on $F$ that is needed for guaranteed convergence. The regularization method is briefly explained in Section~\ref{sec:regDFT} below. For the Kohn--Sham iteration in continuum DFT the convergence is still an open problem, a direct generalization of the finite-lattice case has been found to be insufficient \cite{penz2020convergence}.
\changed{In practical applications that suffer from convergence issues, imaginary-time propagation in time-dependent DFT has recently been found as a viable alternative to find a Kohn--Sham ground state~\cite{flamant2019imaginary}.}

\section{Density-potential mixing and regularized DFT}
\label{sec:regDFT}

The full HK theorem guarantees a unique inversion from densities to potentials, but the whole discussion, especially regarding the necessary conditions for showing HK2, probably already made us a little bit sceptical about its validity in different settings. We will thus introduce a method that always guarantees a bijective mapping, not between densities and potentials, but between \emph{quasidensities} (called \emph{pseudo-densities} in the original work on regularization \cite{Kvaal2014}) and potentials. The basic idea is simple: If for some reason we cannot guarantee a unique (injective) mapping from potentials to ground-state densities $v\mapsto\rho[v]$, meaning that different $v\neq v'$ map to the same $\rho[v] = \rho[v']$, then let us try it for $v \mapsto \rho_\eps[v] = \rho[v] - \eps v$, where at least in the previous example we would have $\rho_\eps[v] \neq \rho_\eps[v']$ for sure. One could argue that this could just as easily introduce new problems for injectivity, like having $v\neq v'$ such that $\rho_\eps[v] = \rho_\eps[v']$, but we will show in the following that this cannot be the case for the functionals considered here. Remember that the mapping $v \mapsto \rho[v]$ can be defined by the superdifferential of $E$, $\rho[v] = \overline\partial E[v]$, as explained in Eq.~\eqref{eq:E-superdiff}. So what is the corresponding functional $E_\eps$ such that $\rho_\eps[v] = \rho[v]-\eps v = \overline\partial E_\eps[v]$? The superdifferential retains the linear nature of a derivative if only concave functionals are added, so we can look for a convex functional $\phi$ such that $\overline\partial (-\phi)[v] = -\underline\partial \phi[v] = -v$. In a general space, such a question proves hard \cite{penz2020convergence}, but it is easy to see that in the usual space $L^2$ of square-integrable functions the norm square gives exactly what we need, $\phi[v] = \tfrac{1}{2}\|v\|^2 = \tfrac{1}{2}\langle v,v \rangle$. In any case, we have established $E_\eps = E - \eps \phi$ and $\rho_\eps[v] = \overline\partial E_\eps[v]$ with such a convex $\phi$. But in many cases, not only for the mentioned $L^2$ space, the functional $\phi$ is not only convex, but \emph{strictly convex}, meaning that any local minimizer is not only global but even unique. But this feature transfers to $E_\eps$ if $-\eps \phi$, as a \emph{strictly concave} functional, is added to $E$. Consequently, $E_\eps$ is also strictly concave and any maximizing potential in
\begin{equation}\label{eq:Feps-def}
    F_\eps[x] = \sup_v \{E_\eps[v]-\langle v,x \rangle\}
\end{equation}
is necessarily unique (not just up to a constant). This means we can always uniquely map $v \mapsto x = \rho[v]-\eps v$ and back. We wrote $x$ now to make clear that this is a quasidensity, a mixture between a density and its associated potential. As such it is neither necessarily normalized nor positive, just a general element of the density space, $x\in X$. 
By what we learned in Section~\ref{sec:differentials}, the \emph{quasidensity-potential mapping} can also be directly defined by $-\underline\partial F_\eps[x]= \{v\}$ for all $x$ without any  ``$v$-representability'' restriction for $x$. Consequently, the mapping is defined for all $x$ in the density space $X$ and thus bijective.

The whole maneuver of passing from $F$ to $F_\eps$ corresponds to a regularization strategy called Moreau--Yosida regularization \cite{Kvaal2014,Kvaal2022-MY} by which not only the concave $E$ transforms into a strictly concave $E_\eps$, but also the $F_\eps$ defined by Eq.~\eqref{eq:Feps-def} is finally differentiable if the spaces $X,X^*$ have some additional properties \cite[Theorem~9]{KSpaper2018}.
The only problem is that this requires the space $X$ to be reflexive, which it is not in our current formulation as introduced in Section~\ref{sec:rep}, since it includes the non-reflexive $L^1$ in its definition. So a different choice for the basic spaces, like $X=L^2$ on a bounded domain \cite{Kvaal2014} or $X=L^3$ as a larger alternative to our space\cite{KSpaper2018} has to be taken.

This section demonstrated how such a regularization that facilitates a unique (quasi)density-potential mapping can be used to fully circumvent any reference to the HK theorem. But to avoid confusion we will \emph{not} say that in a regularized setting the HK theorem ``holds'' even though a unique and well-defined (quasi)density-potential mapping exists.
\changed{It is interesting to note that the popular Zhao--Morrison--Parr method for density-potential inversion already implicitly employs Moreau--Yosida regularization and a limit procedure $\eps\to 0$~\cite{Penz2022ZMP}.}

\section{Abstract density-potential mapping}
\label{sec:abstract-dens-pot}

The presented form of HK1 allows for an abstraction and thereby for generalizations. Therein, the density is generalized to any system-inherent quantity that seems suitable to describe other system parameters that we are interested in. This could be the density together with the spin density, a current-quantity etc.
On the other side, we select a generalized form of the potential that enters the Hamiltonian and that is able to steer the ``density-quantity'' by coupling to it. Such a framework was developed in \citeauthor{KSpaper2018}~\cite{KSpaper2018}, building on Banach spaces and their duals for density and potential quantities. This enables us to employ the regularization technique from Section~\ref{sec:regDFT} to obtain a well-defined Kohn--Sham iteration scheme.

In order to be more concrete, let $\xx$ be the density quantity describing a state that will in general include many components, like different densities, currents etc., and $\vv$ the collection of external potentials acting on them. At this point we do not even assume that $\xx$ and $\vv$ have the same number or type of components like a dual structure between densities and potentials would impose. Instead of a linear pairing $\langle\vv,\xx\rangle$ for the coupling to the external potential we can introduce an arbitrary functional $f[\vv,\xx ]$.
Then the \emph{only} necessary condition left for an abstract HK1 is that the ground-state energy expression has the form
\begin{align}
    &\tilde F[\xx] = \inf_{\psi \mapsto \xx}\left\{ \langle \psi |H_0 | \psi \rangle \right\}, \nonumber\\
    &E[\vv] = \inf_\xx \{\tilde F[\xx] + f[ \vv,\xx ]\}.\label{eq:def-E-abstract}
\end{align}
Since $\tilde F[\xx]$ is independent of $\vv$, the critical argument in the first proof of HK1 still holds and thus two potentials that share a common $\xx$ in the ground state will also share a common ground-state wave function or density matrix. Consequently, HK1 is secured in any such formulation of DFT, while the situation for HK2 quite generally is more problematic. Even if the coupling between $\vv$ and $\xx$ that enters the energy functional in \changed{Eq.~\eqref{eq:def-E-abstract}} is linear like in $f[\vv,\xx] = \langle \vv, \xx \rangle$, the critical step \eqref{eq:HK2-proof-step} in the proof of HK2 will involve more degrees-of-freedom on the potential side and the argument may fail.

In the literature, the presented situation with linear coupling corresponds to what \citeauthor{Schoenhammer1995}~\cite{Schoenhammer1995} call $\{a\}$-functional theory.
Similarly, \citeauthor{higuchi2004arbitrary}~\cite{higuchi2004arbitrary,higuchi2004arbitrary2} allow for a more general choice of basic variables in DFT next to the usual density. 
\changed{\citet{xu2022extensibility} derived conditions that need to be fulfilled to also have a HK2 in such a general setting.}
\changed{One can then try and} extend DFT and the Kohn--Sham scheme systematically to predict further system parameters, if good approximative functionals can be found.

A first example would be the spin-resolved functional that has the usual one-particle density $\rho = \rho_{\uparrow} + \rho_{\downarrow}$ and the spin-density $\rho_{\uparrow} - \rho_{\downarrow}$ as basic variables, $\xx =  (\rho_{\uparrow} + \rho_{\downarrow},\rho_{\uparrow} - \rho_{\downarrow})$. An alternative possible choice would clearly be $\xx =  (\rho_{\uparrow},\rho_{\downarrow})$ \cite[Section~8.1]{parr}. The energy functional is $E[v] = \inf_\xx \{\tilde F[\xx] + \langle v,\rho \rangle\}$, with $v$ just the usual scalar potential that couples to the one-particle density $\rho$. The involved spaces for densities and potentials are not dual in this example, since they involve a different number of components. But by choosing an $\FHxc[\xx]$ that depends on the spin-resolved density, the Hxc-potential as its derivative (and with it the effective potential of the Kohn--Sham system) must be from the dual space of $\xx$ and thus include components that act on the different spin-components individually.

A second example is CDFT and its variants that will be thoroughly discussed in Part II of this review. The paramagnetic current density of a given state $\psi \in \psispace$ is defined as
\begin{equation*}
        \jpara_\psi(\rr_1) =N \sum_{\sigmaN} \int_{\mathbb R^{3(N-1)}} \mathrm{Im}\left\{ \psi^*\nabla_1\psi \right\} \d \rrtwotoN.
\end{equation*}
Then the amended density quantity is $\xx=(\rho,\jpara)$ which couples linearly to $\vv=(v + \frac{1}{2}|\mathbf{A}|^2, \mathbf{A})$ \cite{Vignale1987}.
%
Since by this the potential-energy contribution amounts exactly to the linear pairing $f[\vv,\xx] = \langle \vv,\xx \rangle$ that allows to define a potential-independent constrained-search functional, HK1 holds.

This means one can continue along the lines started in this work and try to generalize many concepts and results from above to such extended DFTs. This includes the definition of representable densities (Section~\ref{sec:rep}), different functionals (Section~\ref{sec:functionals}), functional differentiability (Section~\ref{sec:differentials}), setting up a Kohn--Sham scheme (Section~\ref{sec:KS}), as well as regularisation (Section~\ref{sec:regDFT}), since also there the existence of a full HK theorem was hardly ever assumed.

\begin{figure*}[ht!]
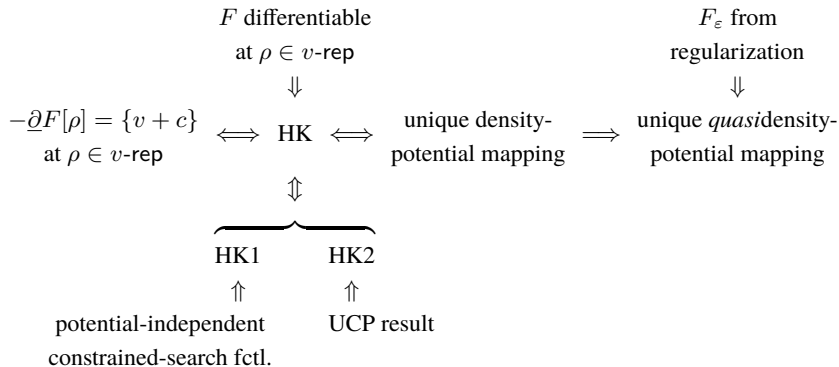

\begin{equation*}
\begin{array}{ccccccc}
    & \multicolumn{3}{c}{\text{$F$ differentiable}} &&&
        \text{$F_\eps$ from} \\
    & \multicolumn{3}{c}{\text{at $\rho \in \vrep$}} &&&
        \text{regularization} \\
    & & \Downarrow &&&& \Downarrow \\
    -\underline\partial F[\rho] = \{v+c\} &
        \multirow{2}{*}{$\Longleftrightarrow$} & 
        \multirow{2}{*}{HK} &
        \multirow{2}{*}{$\Longleftrightarrow$} &
        \text{unique density-} &
        \multirow{2}{*}{$\Longrightarrow$} &
        \text{unique \emph{quasi}density-} \\
    \text{at $\rho \in \vrep$} &&&&
        \text{potential mapping} &&
        \text{potential mapping} \\
    && \Updownarrow \\
    & \multicolumn{3}{c}{$\downbracefill$} \\
    & \text{HK1} && \text{HK2} \\
    & \Uparrow && \Uparrow \\
    \multicolumn{3}{c}{$\text{potential-independent}$} & \multicolumn{2}{l}{$\text{UCP result}$} \\
    \multicolumn{3}{c}{$\text{constrained-search fctl.}$}  
\end{array}
\end{equation*}
\caption{Logical implications between the different statements relating to a ``unique density-potential mapping'' and the \changed{HK} theorem in standard DFT.}
\label{fig:HK-structure}
\end{figure*}

\section{Summary}
\label{sec:summary}

We will give a brief summary of the structure of the density-potential mapping and its relation to the \changed{HK} theorem. Following the last section on abstract DFTs, at least the HK1 result does not only hold for standard DFT (that maps one-particle densities to scalar potentials), but it holds for all variants of DFTs that offer the required structures. This will be especially useful with foresight towards CDFT, the topic of the second part of this review.

In standard DFT, with a setting that yields the unique-continuation property that in turn prevents the ground-state density from being zero on a set of non-zero measure (Section~\ref{sec:UCP}) and due to the simple relation \eqref{eq:HK2-proof-step} in the proof of HK2, a full HK result can be established. In any higher DFT this proof strategy potentially fails. The status of HK1, on the other hand, is much less critical, since this result holds automatically whenever a \emph{potential-independent} (``universal'') constrained-search functional can be set up. But also in cases where the constrained-search functional depends on the external potential, a valid statement like in the HK theorem, that two potentials that share a common ground-state density are equal up to gauge changes, is still possible in general. The more general way how to think and talk about a HK result is by calling it a ``unique density-potential mapping'' and we explained how such a mapping can be established as the subdifferential of the density functional $F$ at $v$-representable densities. If the potentials in the resulting subdifferential are equal up to a gauge transformation, then this is just the HK result again. Assuming full differentiability of $F$ implies a one-element subdifferential, so there would not even be any room for gauge changes, and a unique density-potential mapping would be the result once more. This property of differentiability of the density functional $F$ is desirable also in the context of Kohn--Sham theory in order to be able to link the functional $F_\mathrm{Hxc}$ to the Hxc potential like in Eq.~\eqref{eq:vxc-diff-F}.

But since differentiability is \emph{not} a property of the usual DFTs, a regularization strategy was devised and briefly explained in Section~\ref{sec:regDFT}. This yields a unique \emph{quasi}density-potential mapping, where quasidensities are actually mixtures between ground-state densities and their potentials. The mixing parameter $\eps$ could be set to zero to retrieve the unregularized theory together with the problem of non-differentiability. The whole structure is laid out diagrammatically in Figure~\ref{fig:HK-structure}.

\section{Outlook}
\label{sec:outlook}

\changed{In this outlook, we first want to collect the problems that still remain open within the foundations of standard DFT and that will surely be the topic in upcoming works.
Considering Lieb's mathematical formulation of DFT, summarized above in Section~\ref{sec:rep} and \ref{sec:functionals}, there are two main issues.
Firstly, HK2 is guaranteed only for eigenstates that are non-zero almost everywhere, a property that is secured by the UCP explained in Section~\ref{sec:UCP}. But the potential space required for this does not cover all potentials from the Lieb setting and a sufficiently general UCP result is not available to date. Secondly, the issue of $v$-representability, explained in Section~\ref{sec:rep}, still remains open. While regularization as described in Section~\ref{sec:regDFT} formally allows us to circumvent this problem, it has not yet been put to practical use. Since the overlap between interacting and non-interacting $v$-representability is poorly understood, this has direct implications for Kohn--Sham theory. But even with $v$-representability assumed, convergence of the Kohn--Sham self-consistent field iterations in the standard setting is still an open problem. Both issues, availability of UCP and $v$-representability, relate to the function spaces for densities and potentials. Possibly, with a more refined choice of these spaces, full $v$-representability or even differentiability of $F$ might be achievable. However, it also cannot be ruled out that non-differentiability is fundamental to DFT.
}

\changed{
This non-differentiability of $F$, that has been repeatedly stressed in this work, implies that the exchange-correlation potential cannot be found as a functional derivative with respect to the density, as it is usually assumed in standard DFT.
Orbital-dependent functionals~\cite{KUMMEL_RMP80_3} can be formally viewed as relying on the HK1 map $\rho\mapsto\phi$ to obtain the Kohn--Sham wave function from a density. 
Non-differentiability of $F[\rho]$ might then be represented in the noninteracting wave function $\phi[\rho]$, which may benefit the functional approximations if they rely directly on the Kohn--Sham orbitals. 
%
The lack of differentiability also favors approaches based on forces instead of energies, as mentioned in Section~\ref{sec:KS}. However, practical functionals that are derived from this approach remain unexplored and there is still a dependence on $v$-representability.}

It is interesting to note which useful structures of DFT carry over to ``higher'' \changed{density-functional} theories, and in Part II we will discuss density-functional theory for systems involving magnetic fields. While one of its flavours, paramagnetic CDFT, already briefly discussed in Section~\ref{sec:abstract-dens-pot}, still allows for a constrained-search functional (HK1), the realization of a full density-potential mapping is highly problematic.
For this reason, in the classical formulation of paramagnetic CDFT \cite{vignale-rasolt-geldart1990} the HK2 result that different potentials lead to different ground states was just tacitly assumed with the words: ``Let $\psi$ and $\psi'$ be the two different ground states corresponding to the two sets of fields [$(v,\mathbf{A})$ and $(v',\mathbf{A}')$].'' Later, \citeauthor{Capelle2002}~\cite{Capelle2002} even found counterexamples to HK2 which shows that a density-potential mapping cannot be constructed in paramagnetic CDFT. But this clearly does not mean that in different versions of CDFT the density-potential mapping is impossible to achieve in general. A formulation utilizing the total current will be studied as well, but here the constrained-search functional would depend on $\mathbf{A}$ and thus HK1 is not available in the fashion as it was presented here. So while for paramagnetic CDFT the HK2 fails, for total (physical) CDFT already HK1 does not hold. Overall, the existence of a well-defined density-potential mapping in CDFT is still an open issue that will be considered in the second part of this review.

\section*{Acknowledgement}

EIT, MAC and AL thank the  Research Council of Norway (RCN) under CoE (Hylleraas Centre) Grant No. 262695, for AL and MAC also CCerror Grant No.~287906 and for EIT also ‘‘Magnetic Chemistry’’ Grant No.~287950, and MR acknowledges the Cluster of Excellence “CUI: Advanced Imaging of Matter” of the Deutsche Forschungsgemeinschaft (DFG), EXC 2056, project ID 390715994. AL and MAC was also supported by the ERC through StG REGAL under agreement No.~101041487.
The authors thank Centre for Advanced Studies (CAS) in Oslo, since this work includes insights gathered at the YoungCAS workshop ``Do Electron Current Densities Determine All There Is to Know?'', held July 9-13, 2018, in Oslo, Norway.

\section*{Bibliography}
%

\end{document}